\documentclass[3p,preview]{elsarticle} 

\usepackage{graphicx}
\usepackage{dcolumn}
\usepackage{bm}
\usepackage{color}
\usepackage{amsfonts}
\usepackage{amssymb}
\usepackage{amsmath}
\usepackage{amsthm} 
\usepackage{enumerate}

\biboptions{sort&compress}
\theoremstyle{plain}
\newtheorem{thm}{Theorem}

\newtheorem{prp}{Proposition}
\newtheorem{cor}{Corollary}

\theoremstyle{definition}

\journal{Applied Mathematical Letters}

\begin{document}

\begin{frontmatter}

\title{Simultaneous diagonalization via congruence: Nondegenerate families of symmetric matrices }

\author[unifei]{L. G. Gomes\corref{cor1}}
\ead{lggomes@unifei.edu.br}
\cortext[cor1]{Corresponding author}

\address[unifei]{Institute of Mathematics and Computation, Federal University of Itajubá - Brazil}

\begin{abstract}
In this letter, we give necessary and sufficient conditions for the real symmetric matrices $A_1 , \ldots , A_m$ to ensure they are simultaneously diagonalizable via congruence.   
\end{abstract}

\begin{keyword}
Simultaneous diagonalization \sep Symmetric matrices \sep Family of bilinear forms
\end{keyword}

\end{frontmatter}

\section{Introduction}

Simultaneous diagonalization via congruence of a general family of matrices is an important issue in applied mathematics that arises under specific applications in many areas, as diverse as signal processing \cite{DeLathauwer2006}, cosmology\footnote{This work arouse when investigating simultaneous diagonalizability of Bianchi models in cosmology\cite{Gomes}, which triggered the process that culminated in this manuscript.} \cite{Gomes, LQC_Bianchi2023}, acoustic and structural engineering \cite{Garvey2002}, and optimization \cite{BenTal2014}, so much so that in Ref. \cite{Hiriart} it has been cast as an important open question in applied mathematics. It is a well studied topic with sparse and disconnected results along the $19^{th}$ and $20^{th}$ centuries (see \cite{Uhlig} for a review of the achievements on the subject at that time), which in general involved just two symmetric forms \cite{Becker,Greub, Wonen, Uhlig2}. Only in the last decade a significant contribution have appeared involving a general number of forms \cite{Jiang}. In this manuscript, we generalise this last result.

To state the problem, take a real vector space $E$ of dimension $n$ with a family $\mathcal{F}=\{\gamma_t \, | \, t \in \mathcal{I} \}$ of symmetric bilinear forms on it, where $\mathcal{I}$ is any set of indexes. Any basis $\mathfrak{B}=\{e_1,\ldots,  e_n\}$ of $E$ gives rise to a family of symmetric $n\times n$ matrices with entries $\gamma_{ij}(t) := \gamma_t(e_i,e_j)$:    
\begin{equation}
[\gamma_t]_\mathfrak{B} := 
\left[ 
\begin{array}{ccc}
\gamma_{11}(t) & \gamma_{12}(t) & \ldots \\
\gamma_{12}(t) & \gamma_{22}(t) & \ldots \\
\vdots & \vdots & \ddots
\end{array}
\right] \quad . 
\end{equation}
If  $[\gamma_t]_\mathfrak{B}$ is in diagonal form for every $t \in \mathcal{I}$, i.e.,  $\gamma_t(e_i,e_j) =0$ for $i \ne j$, we say $\mathfrak{B}$ diagonalizes $\{\gamma_t \}$. This family is called simultaneously diagonalizable if it is diagonalized by a basis of $E$. In a simpler way, this is equivalent to the existence of a $n\times n$ invertible matrix $Q$ diagonalizing the entire family of symmetric matrices $[\gamma_t]$, that is,
\begin{equation}
 Q^\dagger \, [\gamma_t] \, Q  = 
\left[ 
\begin{array}{cccc}
\lambda_{1}(t) & 0 & \ldots & 0 \\
0 & \lambda_{2}(t) & \ldots & 0 \\
\vdots & \vdots & \ddots & \vdots \\
0 & 0 & \ldots & \lambda_{n}(t) 
\end{array}
\right] \qquad \forall \, t \, \in \mathcal{I} ,
\end{equation}
where '$\dagger$' stands for the transpose of a matrix. In the language of matrices, this is referred to as simultaneous diagonalization via congruence. The main issue associated with this concept is to find simple algebraic relations which are necessary and sufficient for ensuring it. For this, we put forth Theorem \ref{Thm:SimultaneousDiagonalization}, that in matrix terms can be stated as 
\begin{thm}\label{Th:SimultaneousDiagonalizationMatrix} 
Suppose $A_1  , \ldots , A_m$ are real $n \times n$ symmetric matrices and $A_0$ is any invertible matrix obtained as a linear combination of them. They are simultaneously diagonalizable via congruence if, and only if, the following two conditions are satisfied:
\begin{itemize}
\item[(SD1)] Each $G_i := A_0^{-1} \, A_i$ is a diagonalizable matrix, that is, its Jordan canonical form is diagonal.
%
\item[(SD2)] For each matrix $b_{ij}:= A_i \, A_0^{-1} \, A_j$,  we have $b_{ij}=b_{ji}$.
\end{itemize}
\end{thm}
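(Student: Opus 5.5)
The plan is to prove both directions directly in matrix language, using $A_0$ as a nondegenerate reference form.

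\textbf{Necessity.} Suppose $Q^\dagger A_i Q = D_i$ is diagonal for every $i$. Since $A_0$ is a linear combination of the $A_i$, the matrix $D_0 := Q^\dagger A_0 Q$ is a linear combination of the $D_i$, hence diagonal, and it is invertible because $A_0$ is. Then
\[
A_0^{-1} A_i = Q D_0^{-1} Q^\dagger Q^{-\dagger} D_i Q^{-1} = Q\,(D_0^{-1} D_i)\,Q^{-1},
\]
which is similar to a diagonal matrix, so (SD1) holds. Likewise
\[
A_i A_0^{-1} A_j = Q^{-\dagger} D_i D_0^{-1} D_j Q^{-1}.
\]
Diagonal matrices commute, so this expression is symmetric in $i,j$, which gives (SD2).

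\textbf{Sufficiency.} First I rewrite (SD2). Multiplying $b_{ij}=b_{ji}$ on the left by $A_0^{-1}$ gives $G_iG_j=G_jG_i$. Also, since $A_0$ and $A_i$ are symmetric, $A_0 G_i = A_i = G_i^\dagger A_0$, so each $G_i$ is self-adjoint with respect to the nondegenerate form $A_0$. I then have a commuting family of diagonalizable matrices $G_1,\dots,G_m$. By the standard result, they are simultaneously diagonalizable over $\mathbb{C}$, and $\mathbb{C}^n$ splits into joint eigenspaces $V_\lambda$, indexed by tuples $\lambda=(\lambda_1,\dots,\lambda_m)$.

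\textbf{Orthogonality of joint eigenspaces.} Take $u\in V_\lambda$ and $v\in V_\mu$ with $\lambda\ne\mu$, say $\lambda_i\neq\mu_i$. Self-adjointness gives
\[
\lambda_i\, u^\dagger A_0 v = (G_iu)^\dagger A_0 v = u^\dagger A_0 G_i v = \mu_i\, u^\dagger A_0 v,
\]
so $u^\dagger A_0 v = 0$. This uses the bilinear transpose, not the conjugate transpose. Consequently $A_0$ restricted to each $V_\lambda$ is nondegenerate. On $V_\lambda$ we have $A_i = A_0 G_i = \lambda_i A_0$, so every $A_i$ is a scalar multiple of $A_0$ there. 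Diagonalizing the symmetric form $A_0|_{V_\lambda}$ then diagonalizes every $A_i$ on that block.

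\textbf{Main obstacle: real eigenvalues.} The basis must be real, so I need real eigenvalues. A real symmetric $A_0$ may be indefinite, so $A_0$-self-adjointness alone does not force real spectra. This is where I expect the real difficulty to lie.

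My first attempt is to handle complex conjugate pairs. If $\lambda$ is nonreal, then $\overline{V_\lambda}=V_{\bar\lambda}$. By the orthogonality above, $V_\lambda$ is $A_0$-isotropic and pairs nondegenerately with $V_{\bar\lambda}$. On the real subspace $W=(V_\lambda\oplus V_{\bar\lambda})\cap\mathbb{R}^n$, the forms $A_i$ are combinations of two forms of the shape $\mathrm{Re}$ and $\mathrm{Im}$ of $\lambda_i A_0$. They are then simultaneously diagonalizable only in special cases.

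So I would check whether the theorem implicitly requires real Jordan form. The intended reading is probably that "diagonalizable" means diagonalizable over $\mathbb{R}$, that is, real eigenvalues with a diagonal Jordan form. I would adopt that interpretation explicitly. Under it, the $G_i$ are simultaneously diagonalizable over $\mathbb{R}$, and each $V_\lambda$ is real. The argument above then produces a real basis $Q$, which completes the proof.
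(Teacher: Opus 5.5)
Your proof is correct under the reading of (SD1) you adopt. It reaches the same endpoint as the paper: a decomposition into joint eigenspaces of the $G_i$, on each of which every $A_i$ equals $\lambda_i A_0$, after which $A_0$ is diagonalized blockwise. The route there is different.

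\textbf{The paper's route.} It never rewrites (SD2) as commutativity. It takes the eigenspace decomposition of a single $G_{t_0}(t)$ and uses the symmetry of $b_{t_0}(s,t)$ to show this decomposition is $\gamma_s$-orthogonal for every $s$. It then characterizes each $E_i$ as the $\gamma_{t_0}$-orthogonal complement of the others, and deduces that each $E_i$ is invariant under every $G_{t_0}(s)$. Finally it refines the decomposition repeatedly; this ``induction'' step is only sketched.

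\textbf{Your route.} You observe that $b_{ij}=b_{ji}$ is exactly $G_iG_j=G_jG_i$ after multiplying by $A_0^{-1}$. You then hand the refinement to the standard theorem on commuting diagonalizable operators. You get the orthogonality from $A_0$-self-adjointness, which uses only the symmetry of $A_0$ and $A_i$.

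\textbf{What each buys.} Your version is shorter and makes the meaning of (SD2) transparent. It also separates cleanly which hypothesis does what: (SD1) and (SD2) give the joint eigenspaces, and symmetry alone gives their orthogonality. The paper's version is self-contained and coordinate-free, but in effect it re-proves the commuting-operators lemma in the language of bilinear forms.

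\textbf{On the interpretation of (SD1).} Reading (SD1) as diagonalizability over $\mathbb{R}$ is exactly what the paper's proof uses (``there exists a basis for $E$ of eigenvectors \dots\ In particular, each of its eigenvalues is real''). That reading is necessary, not just convenient. Your own necessity computation shows that simultaneous diagonalizability forces $A_0^{-1}A_i = Q D_0^{-1}D_i Q^{-1}$ to have real spectrum. So a non-real joint eigenvalue rules it out entirely, not ``only in special cases''. For example, take $A_1=\mathrm{diag}(1,-1)$, $A_2=\bigl(\begin{smallmatrix}0&1\\1&0\end{smallmatrix}\bigr)$ and $A_0=A_1$. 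These satisfy (SD2) trivially and satisfy (SD1) over $\mathbb{C}$ (eigenvalues $\pm i$), yet they are not simultaneously diagonalizable.

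\textbf{Two small points.}
In your discarded paragraph, $V_\lambda$ is isotropic for the Hermitian form $\bar u^\dagger A_0 v$, not for the bilinear one. Bilinearly, your own computation gives $V_\lambda\perp V_{\bar\lambda}$, with $A_0|_{V_\lambda}$ nondegenerate.
When you assemble the basis, state the cross-block $A_i$-orthogonality explicitly; you only stated it for $A_0$. It follows from $u^\dagger A_i v=\mu_i\,u^\dagger A_0 v=0$ for $v\in V_\mu$.
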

Indeed, condition $(SD1)$ tells us that the pair $\{A_0, A_i\}$ is simultaneous diagonalizable for each $i$ \cite{Greub}, while  condition $(SD2)$ ensures that a set of matrices simultaneously diagonalizable to a common one can be simultaneously diagonalized as a whole. 
If we specialize to a positive definite pencil, we may restate the previous theorem in an even simpler form, which can be seen as the coordinate free equivalent to the Theorem 3.3 in \cite{Jiang}: 
\begin{cor}\label{Cor:SimultaneousDiagonalizationMatrixDefinite} 
If $A_0$ is positive (negative) definite, then the matrices $A_1  , \ldots , A_m$  are simultaneously diagonalizable via congruence if, and only if, $b_{ij}=b_{ji}$ for each  $i,j=1, \ldots , m$.
%
%
\end{cor}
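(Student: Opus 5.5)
The plan is to deduce the corollary from Theorem~\ref{Th:SimultaneousDiagonalizationMatrix}. Since $A_0$ is definite, it is invertible, and we may take it as the distinguished invertible combination in the theorem. The corollary reads ``$A_0$ is positive definite'', so we assume that $A_0$ is itself one of the combinations of $A_1,\ldots,A_m$; if it is not, adjoining $A_0$ to the family changes neither simultaneous diagonalizability nor the relations $b_{ij}=b_{ji}$, once we check that $b_{0j}=b_{j0}$ holds trivially. The negative definite case reduces to the positive one by replacing $A_0$ with $-A_0$. This flips the sign of every $b_{ij}$ and multiplies every $G_i$ by $-1$, so neither condition (SD1) nor (SD2) is affected. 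It therefore suffices to show that, when $A_0$ is positive definite, condition (SD1) holds automatically. The corollary then says exactly that (SD2) is necessary and sufficient.

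To verify (SD1), write $A_0 = L L^\dagger$, using either the Cholesky factorization or the symmetric square root $A_0^{1/2}$. Then
\[
G_i = A_0^{-1}A_i = (L^\dagger)^{-1}\,\bigl(L^{-1} A_i (L^\dagger)^{-1}\bigr)\, L^\dagger ,
\]
so $G_i$ is similar to the matrix $S_i := L^{-1}A_i(L^{-1})^\dagger$. This $S_i$ is real symmetric, and by the spectral theorem it is orthogonally diagonalizable with real eigenvalues. Hence each $G_i$ has a diagonal Jordan form, which is (SD1).

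With (SD1) automatic, Theorem~\ref{Th:SimultaneousDiagonalizationMatrix} immediately gives the equivalence between simultaneous diagonalizability via congruence and $b_{ij}=b_{ji}$ for all $i,j$. For a self-contained check of sufficiency, one can also argue directly. Conjugating by $L$ gives $L^{-1} b_{ij} (L^{-1})^\dagger = S_i S_j$, so $b_{ij}=b_{ji}$ means the symmetric matrices $S_i$ commute. Commuting symmetric matrices are simultaneously orthogonally diagonalizable by some orthogonal $O$, and then $Q=(L^{-1})^\dagger O$ diagonalizes every $A_i$ by congruence.

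The only delicate point is the bookkeeping about whether $A_0$ lies in the span of the $A_i$. The argument above sidesteps it, since the direct route never uses that hypothesis. Otherwise the steps are routine.
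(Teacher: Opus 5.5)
Your main line is correct and is the paper's route. Definiteness makes (SD1) automatic, and Theorem~\ref{Th:SimultaneousDiagonalizationMatrix} then reduces to (SD2). The paper phrases the key step as $G_{t_0}(t)$ being self-adjoint for the inner product $\gamma_{t_0}$; your similarity $G_i=(L^\dagger)^{-1}S_iL^\dagger$ with $S_i$ symmetric is the matrix form of the same fact. Your direct sufficiency argument via the commuting $S_i$ is a valid, self-contained shortcut that avoids the paper's eigenspace-refinement argument.

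The error is in your handling of the case where $A_0$ is not a linear combination of $A_1,\ldots,A_m$.

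\begin{itemize}
\item Adjoining $A_0$ preserves the symmetry conditions, since $b_{0j}=b_{j0}=A_j$. It does not preserve simultaneous diagonalizability: a congruence diagonalizing the $A_i$ diagonalizes their linear combinations, not an arbitrary extra matrix.
\item Your closing claim that the direct route ``never uses that hypothesis'' holds only for sufficiency, which is indeed valid for any definite $A_0$. Necessity comes from Theorem~\ref{Th:SimultaneousDiagonalizationMatrix}. Its ``only if'' part, like that of Theorem~\ref{Thm:SimultaneousDiagonalization} where $\gamma_{t_0}$ is a member of the family, needs $Q^\dagger A_0 Q=D_0$ to be diagonal along with the $D_i=Q^\dagger A_iQ$. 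Only then is $Q^\dagger b_{ij}Q=D_iD_0^{-1}D_j$ symmetric.
\end{itemize}

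Without that hypothesis, necessity fails. Take $A_1=\mathrm{diag}(1,0)$ and $A_2=\mathrm{diag}(0,1)$, which are already diagonal, and the positive definite
\[
A_0=\left[\begin{array}{cc} 2 & 1\\ 1 & 2\end{array}\right], \qquad A_0^{-1}=\frac{1}{3}\left[\begin{array}{cc} 2 & -1\\ -1 & 2\end{array}\right].
\]
Then $b_{12}=A_1A_0^{-1}A_2$ has its only nonzero entry $-1/3$ in position $(1,2)$, whereas $b_{21}$ has it in position $(2,1)$, so $b_{12}\neq b_{21}$.

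The corollary must therefore be read, as in your main line, with $A_0$ a definite linear combination of the $A_i$. You should drop the remark that this hypothesis can be sidestepped.
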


In the following, we prove the facts stated above in a coordinate free manner, so that a symmetric matrix is just a representation of a symmetric bilinear form in a chosen basis and congruence arises as a change of coordinates. 

\section{Simultaneous diagonalization for non-degenerate families of symmetric forms}\label{Sec:MainThm}
For a symmetric bilinear form $\gamma$, when the induced linear map $\gamma^\flat$ is an isomorphism between $E$ and its dual space $E^*$, where $\gamma^\flat(u):= \gamma(u,\cdot)$, we say that $\gamma$ is non-degenerate. In this case, its inverse is denoted by $\gamma^{\#}: E^* \to E$.  We assume that the family $\mathcal{F}=\{\gamma_t \, | \, t \in \mathcal{I} \}$ of symmetric bilinear forms is non-degenerate, by which we mean that it has a non-degenerate element $\gamma_{t_0}$ for some $t_0 \in \mathcal{I}$. Therefore,  for each $t \in \mathcal{I}$ we can build a linear operator $G_{t_0}(t)$ on $E$ by  
\begin{equation}
 G_{t_0}(t):= \gamma^{\#}_{t_0} \circ \gamma^\flat_{t} \quad .
\end{equation}
Its  matrix representation in a basis $\mathfrak{B}$ of $E$ is simply
\begin{equation}
[G_{t_0}(t)]_\mathfrak{B} = [\gamma_{t_0}]^{-1}_\mathfrak{B}\, [\gamma_t]_\mathfrak{B} \, .
\end{equation}
It is straightforward to verify that (\cite{Greub}, sec. 9.3)
\begin{prp}\label{Prop:DiagonalizacaoG}
The two symmetric forms $\gamma_{t_0}$ and $\gamma_{t}$ are simultaneously diagonalizable if, and only if, $G_{t_0}(t)$ is a diagonalizable operator, that is, there exists a basis for $E$ of eigenvectors of $G_{t_0}(t)$.
\end{prp}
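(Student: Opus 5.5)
We prove the two implications separately. Throughout, $G := G_{t_0}(t) = \gamma^{\#}_{t_0}\circ\gamma^\flat_t$ is characterized by the identity
\begin{equation*}
\gamma_{t_0}(G u, v) = \gamma_t(u,v) \qquad \forall\, u,v\in E ,
\end{equation*}
which follows directly from the definitions. Since $\gamma_t$ is symmetric, this identity shows that $G$ is self-adjoint with respect to $\gamma_{t_0}$, i.e.\ $\gamma_{t_0}(Gu,v)=\gamma_{t_0}(u,Gv)$.

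\emph{($\Rightarrow$)} Suppose a basis $\{e_1,\dots,e_n\}$ diagonalizes both forms. Write $\gamma_{t_0}(e_i,e_j)=\mu_i\delta_{ij}$ and $\gamma_t(e_i,e_j)=\lambda_i\delta_{ij}$. Non-degeneracy of $\gamma_{t_0}$ forces every $\mu_i\neq 0$. Then for each $j$,
\begin{equation*}
\gamma_{t_0}(Ge_i,e_j)=\lambda_i\delta_{ij}=\gamma_{t_0}\big((\lambda_i/\mu_i)e_i,e_j\big).
\end{equation*}
Since $\gamma_{t_0}^\flat$ is injective, $Ge_i=(\lambda_i/\mu_i)e_i$. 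Hence the diagonalizing basis is an eigenbasis of $G$, and $G$ is diagonalizable.

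\emph{($\Leftarrow$)} Suppose $E=\bigoplus_k E_{\lambda_k}$ decomposes into eigenspaces of $G$ with distinct (real) eigenvalues $\lambda_k$. The first step is orthogonality of distinct eigenspaces. For $u\in E_{\lambda_k}$ and $v\in E_{\lambda_l}$, self-adjointness gives $\lambda_k\gamma_{t_0}(u,v)=\lambda_l\gamma_{t_0}(u,v)$, so $\gamma_{t_0}(u,v)=0$ when $k\neq l$. Consequently $\gamma_t(u,v)=\gamma_{t_0}(Gu,v)=0$ as well.

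The second step handles the form inside each eigenspace. Because the eigenspaces are mutually $\gamma_{t_0}$-orthogonal and $\gamma_{t_0}$ is non-degenerate on $E$, the restriction of $\gamma_{t_0}$ to each $E_{\lambda_k}$ is non-degenerate. On $E_{\lambda_k}$ we also have $\gamma_t=\lambda_k\gamma_{t_0}$. Diagonalize the symmetric form $\gamma_{t_0}|_{E_{\lambda_k}}$ by a $\gamma_{t_0}$-orthogonal basis, which exists by the standard Gram--Schmidt/Sylvester argument for real symmetric bilinear forms. That same basis then diagonalizes $\gamma_t|_{E_{\lambda_k}}$. Taking the union of these bases over all $k$ gives a basis of $E$ that diagonalizes both forms.

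The main subtlety is the meaning of ``diagonalizable'': we need eigenvalues in $\mathbb{R}$, so that the eigenvectors lie in $E$ rather than its complexification. We read the hypothesis as asserting a real eigenbasis of $E$, which is exactly what the statement says. The other point needing care is non-degeneracy of $\gamma_{t_0}$ on each eigenspace, which the orthogonal-decomposition argument above supplies. Everything else is routine.
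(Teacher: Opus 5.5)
Your proof is correct and complete. The paper does not prove this proposition itself; it only cites Greub. Its proof of Theorem~\ref{Thm:SimultaneousDiagonalization} runs the same argument as yours, specialized here to the pair $\{\gamma_{t_0},\gamma_t\}$: eigenspace decomposition of $G_{t_0}(t)$, mutual orthogonality of the eigenspaces from $\gamma_{t_0}(Gu,v)=\gamma_t(u,v)=\gamma_{t_0}(u,Gv)$, the relation $\gamma_t=\lambda_i\gamma_{t_0}$ on each block, and then any $\gamma_{t_0}$-orthogonal basis adapted to the decomposition. Your check that $\gamma_{t_0}$ is non-degenerate on each eigenspace is harmless but unnecessary, since every real symmetric form admits an orthogonal basis.
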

For $t,s \in \mathcal{I}$ we can define a bilinear form $b_{t_0}(s,t)$ by the relation
\begin{equation}
 b_{t_0}(s,t)\cdot (u,v):= \gamma_s(u, G_{t_0}(t)\cdot v) \, .
\end{equation}
In its matrix representation,
\begin{equation}
[b_{t_0}(s,t)]_\mathfrak{B} = [\gamma_s]_\mathfrak{B} \, [\gamma_{t_0}]^{-1}_\mathfrak{B}\, [\gamma_t]_\mathfrak{B} \, .
\end{equation}
In general, $b(s,t)$ is not symmetric. In fact, we have:
\begin{prp}\label{Prop:SimetriaB}
The bilinear form $b_{t_0}(s,t)$ is symmetric if, and only if,  $b_{t_0}(s,t)= b_{t_0}(t,s)$.
\end{prp}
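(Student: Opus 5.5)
The proof rests on one identity: the transpose of $b_{t_0}(s,t)$ is $b_{t_0}(t,s)$. Once this is shown, the equivalence is immediate, because a bilinear form is symmetric exactly when it equals its own transpose. I will write $b^\dagger(u,v):=b(v,u)$ for the transpose of a bilinear form $b$.

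First I rewrite the definition through the musical maps. By definition, $G_{t_0}(t)=\gamma^{\#}_{t_0}\circ\gamma^\flat_t$. Hence
\begin{equation*}
b_{t_0}(s,t)\cdot(u,v)=\gamma_s\bigl(u,\gamma^{\#}_{t_0}(\gamma_t(v,\cdot))\bigr).
\end{equation*}
Set $w:=\gamma^{\#}_{t_0}(\gamma_t(v,\cdot))$. Then $w$ is characterized by $\gamma_{t_0}(w,z)=\gamma_t(v,z)$ for all $z\in E$. Similarly, set $w':=G_{t_0}(s)\cdot u$, so that $\gamma_{t_0}(w',z)=\gamma_s(u,z)$ for all $z$.

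Next I compute, using the symmetry of $\gamma_s$, $\gamma_t$ and $\gamma_{t_0}$:
\begin{equation*}
b_{t_0}(s,t)\cdot(u,v)=\gamma_s(u,w)=\gamma_{t_0}(w',w)=\gamma_{t_0}(w,w')=\gamma_t(v,w')=\gamma_t(v,G_{t_0}(s)\cdot u)=b_{t_0}(t,s)\cdot(v,u).
\end{equation*}
This gives $b_{t_0}(s,t)^\dagger=b_{t_0}(t,s)$.

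In matrix form this is just the observation
\begin{equation*}
\bigl([\gamma_s][\gamma_{t_0}]^{-1}[\gamma_t]\bigr)^\dagger=[\gamma_t][\gamma_{t_0}]^{-1}[\gamma_s],
\end{equation*}
since the inverse of a symmetric matrix is symmetric. I would mention this as a check on the coordinate-free argument.

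To conclude: $b_{t_0}(s,t)$ is symmetric if and only if $b_{t_0}(s,t)=b_{t_0}(s,t)^\dagger$, and by the identity above the right-hand side is $b_{t_0}(t,s)$. There is no real obstacle. The only care needed is in the step $\gamma_s(u,w)=\gamma_{t_0}(w',w)$, where the defining relation of $G_{t_0}(s)$ must be applied with the symmetry of $\gamma_s$ to put $u$ in the correct slot.
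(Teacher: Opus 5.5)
Your proof is correct and is essentially the paper's: both rest on the identity $b_{t_0}(s,t)^\dagger=b_{t_0}(t,s)$, which the paper verifies by the matrix computation you include as a check, while your main argument is the coordinate-free form of it. One small correction to your closing remark: the step $\gamma_s(u,w)=\gamma_{t_0}(w',w)$ needs no symmetry of $\gamma_s$, since it is just the defining relation of $w'$ evaluated at $z=w$, and in fact only the symmetry of $\gamma_{t_0}$ is used anywhere in your chain.
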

\proof
The equivalence above follows from  
\begin{eqnarray*}
[b_{t_0}(s,t)]_\mathfrak{B}^\dagger & = & \left([\gamma_s]_\mathfrak{B} \, [\gamma_{t_0}]^{-1}_\mathfrak{B} \, [\gamma_t]_\mathfrak{B}\right)^\dagger \\
& = & [\gamma_t]_\mathfrak{B}^\dagger \, [\gamma_{t_0}]^{-\dagger}_\mathfrak{B} \, [\gamma_s]^\dagger_\mathfrak{B} \\
& = & [\gamma_t]_\mathfrak{B} \, [\gamma_{t_0}]^{-1}_\mathfrak{B} \, [\gamma_s]_\mathfrak{B} \\
& = & [b_{t_0}(t,s)]_\mathfrak{B} \, .
\end{eqnarray*}
\qed

The main result in this paper is the following theorem:
\begin{thm} \label{Thm:SimultaneousDiagonalization} 
In  a finite real vector space, given a family $\mathcal{F}=\{\gamma_t \, | \, t \in \mathcal{I} \}$ of symmetric bilinear forms with a non-degenerate element $\gamma_{t_0}$, where $\mathcal{I}$ is any set of indexes, it is simultaneously diagonalizable if, and only if, the following two conditions are satisfied:
\begin{itemize}
\item[(SD1)] For each $t \in \mathcal{I}$,  $G_{t_0}(t)$ is a diagonalizable operator on $E$;
\item[(SD2)] For each $s,t \in \mathcal{I}$, the bilinear form $b_{t_0}(s,t)$ is symmetric.
\end{itemize}
\end{thm}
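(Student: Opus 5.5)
The necessity direction is a direct computation in a diagonalizing basis. Suppose a basis $\mathfrak{B}$ diagonalizes every $\gamma_t$. Then $[\gamma_{t_0}]_\mathfrak{B}$ is diagonal and invertible, so $[G_{t_0}(t)]_\mathfrak{B}=[\gamma_{t_0}]^{-1}_\mathfrak{B}[\gamma_t]_\mathfrak{B}$ is diagonal, which gives (SD1). Also $[b_{t_0}(s,t)]_\mathfrak{B}=[\gamma_s]_\mathfrak{B}[\gamma_{t_0}]^{-1}_\mathfrak{B}[\gamma_t]_\mathfrak{B}$ is a product of diagonal matrices. Diagonal matrices commute and a diagonal matrix is symmetric, so this matrix equals $[b_{t_0}(t,s)]_\mathfrak{B}$ and is symmetric. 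This gives (SD2), consistently with Proposition \ref{Prop:SimetriaB}.

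For sufficiency, I would reduce the problem to simultaneous diagonalization of a commuting family of diagonalizable operators. First I would record two facts.
\begin{itemize}
\item Each $G_t:=G_{t_0}(t)$ is $\gamma_{t_0}$-self-adjoint. Indeed, $\gamma_{t_0}(u,G_t v)=\gamma_t(u,v)$ by the definition $\gamma_{t_0}^\flat\circ G_t=\gamma_t^\flat$, and this expression is symmetric in $u,v$.
\item (SD2) is equivalent to the operators commuting. Since $b_{t_0}(s,t)(u,v)=\gamma_s(u,G_tv)=\gamma_{t_0}(u,G_sG_tv)$, symmetry of $b_{t_0}(s,t)$ together with Proposition \ref{Prop:SimetriaB} gives $\gamma_{t_0}(u,G_sG_tv)=\gamma_{t_0}(u,G_tG_sv)$ for all $u,v$. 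Nondegeneracy of $\gamma_{t_0}$ then yields $G_sG_t=G_tG_s$.
\end{itemize}
So $\{G_t\}$ is a commuting family of operators, each diagonalizable over $\mathbb{R}$ by (SD1).

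The key step is to produce a common eigenbasis that is also $\gamma_{t_0}$-orthogonal. I would use the standard fact that a commuting family of diagonalizable operators on a finite-dimensional space, possibly infinite but spanning a finite-dimensional subalgebra, admits a joint decomposition $E=\bigoplus_\alpha E_\alpha$. Here each $E_\alpha$ is a joint eigenspace on which every $G_t$ acts as a real scalar $\lambda_\alpha(t)$. The proof is by induction: split $E$ into eigenspaces of one $G_t$, note that the others preserve these eigenspaces and restrict to diagonalizable operators, and iterate. Finitely many steps suffice, since only a basis of the span of $\{G_t\}$ matters.

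Next, self-adjointness makes distinct joint eigenspaces $\gamma_{t_0}$-orthogonal. If $\lambda_\alpha(t)\neq\lambda_\beta(t)$ for some $t$, then for $u\in E_\alpha$ and $v\in E_\beta$,
\begin{equation*}
\lambda_\alpha(t)\gamma_{t_0}(u,v)=\gamma_{t_0}(G_tu,v)=\gamma_{t_0}(u,G_tv)=\lambda_\beta(t)\gamma_{t_0}(u,v),
\end{equation*}
so $\gamma_{t_0}(u,v)=0$. Hence $\gamma_{t_0}$ restricted to each $E_\alpha$ is nondegenerate, being an orthogonal summand of a nondegenerate form. By Sylvester's theorem, each $E_\alpha$ has a $\gamma_{t_0}$-orthogonal basis. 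Collecting these bases gives a basis $\{e_i\}$ of $E$ with $\gamma_{t_0}(e_i,e_j)=0$ for $i\ne j$ and with every $e_j$ a joint eigenvector.

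Finally, $\gamma_t(e_i,e_j)=\gamma_{t_0}(e_i,G_te_j)=\lambda_j(t)\gamma_{t_0}(e_i,e_j)=0$ for $i\neq j$, so this basis diagonalizes every $\gamma_t$. The main point needing care is the joint-eigenspace decomposition for a possibly infinite index set $\mathcal{I}$, which I handle by passing to a finite spanning subfamily of the operators $G_t$. The orthogonality argument, by contrast, is routine.
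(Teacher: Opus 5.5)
Your proof is correct. It shares the paper's overall skeleton: decompose $E$ into joint eigenspaces of all the $G_{t_0}(t)$, then take a $\gamma_{t_0}$-orthogonal basis adapted to that decomposition. The key step, however, is obtained differently.

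\textbf{Your route.} You translate (SD2) into commutativity. Using $\gamma_s(u,w)=\gamma_{t_0}(u,G_{t_0}(s)w)$ and nondegeneracy, symmetry of $b_{t_0}(s,t)$ is exactly $G_{t_0}(s)G_{t_0}(t)=G_{t_0}(t)G_{t_0}(s)$. You then invoke the standard theorem on commuting diagonalizable operators. Finally, $\gamma_{t_0}$-self-adjointness gives orthogonality of distinct joint eigenspaces.

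\textbf{The paper's route.} The paper never mentions commutativity. It fixes one $t$ and argues as follows:
\begin{enumerate}
\item Symmetry of $b_{t_0}(s,t)$ shows that the eigenspaces $E_i$ of $G_{t_0}(t)$ are $\gamma_s$-orthogonal for every $s$.
\item It characterizes $E_i$ as the $\gamma_{t_0}$-orthogonal complement of the other eigenspaces; this is where nondegeneracy enters.
\item It deduces $G_{t_0}(s)$-invariance of $E_i$ from $\gamma_{t_0}(u_j,G_{t_0}(s)u_i)=\gamma_s(u_j,u_i)=0$.
\item It then refines the decomposition repeatedly.
\end{enumerate}
In effect the paper reproves the commuting-operators lemma by hand in this setting. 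Along the way it obtains the explicit block form $\gamma_s^{(i)}=\lambda_i(s)\,\gamma_{t_0}^{(i)}$.

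\textbf{What each buys.} Your route makes the meaning of (SD2) transparent. In the matrix version of Theorem 1 it is simply $G_iG_j=G_jG_i$ for $G_i=A_0^{-1}A_i$. Your route also handles an infinite index set $\mathcal{I}$ cleanly, via a finite spanning subfamily of $\{G_{t_0}(t)\}$. The paper covers that point only by asserting that the iterative refinement must stop.

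\textbf{Minor remark.} You do not need nondegeneracy of $\gamma_{t_0}$ restricted to $E_\alpha$ to get a $\gamma_{t_0}$-orthogonal basis there. Every real symmetric bilinear form admits an orthogonal basis.
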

\proof
If $\{\gamma_t \, | \, t \in \mathcal{I} \}$ is simultaneously diagonalizable, then $b_{t_0}(s,t)$ is symmetric for each pair $(s,t)$, as can be seen in a simultaneously diagonal matrix form of the elements of such family. Therefore, the two properties above follows trivially.  

On the other hand, let the properties $(SD1)$ and $(SD2)$ hold for $\mathcal{F}$. By Proposition \ref{Prop:DiagonalizacaoG}, there exists a basis for $E$ of eigenvectors of $G_{t_0}(t)$. In particular, each of its eigenvalues is real. Such operator decomposes the vector space in their eigenspaces
\begin{equation}\label{Eq:Decomposicao}
 E= E_1 \oplus \ldots \oplus E_r \, ,
\end{equation}
where each $E_i$ is associated with a different eigenvalue $\lambda_i$ of $G_{t_0}(t)$. Hence 
\begin{equation}
G_{t_0}(t) \cdot u_i = \lambda_i \, u_i \, ,
\end{equation}
whenever $u_i \in E_i$. The first important property of this decomposition is that it is also $\gamma_s$-ortogonal for any $s \in \mathcal{I}$. This follows from the symmetry of $b_{t_0}(s,t)$, for taking $u_\ell \in E_\ell$ and noting that 
\begin{eqnarray*}
\lambda_j \, \gamma_s(u_i, u_j) & = & \gamma_s( u_i, G_{t_0}(t) \cdot u_j) \\
    & = & b_{t_0}(s,t) \cdot (u_i, u_j) \\
    & = & b_{t_0}(s,t) \cdot (u_j, u_i) \\
    & = & \gamma_s( u_j, G_{t_0}(t) \cdot u_i) \\
    & = & \lambda_i \, \gamma_s(u_i, u_j) \, ,
\end{eqnarray*}
we have 
\begin{eqnarray} \label{Eq:DecomposicaoGammaOrtogonal}
i \ne j & \Longrightarrow & \gamma_s( u_i, u_j) =0 \, . 
\end{eqnarray}
Furthermore, we can also prove that
\begin{equation} \label{Eq:DecomposicaoPertinenciaPartes}
u \in E_i \quad \iff \quad \quad \forall\, j \ne i \quad \forall \, u_j \in E_j \, :\,  \gamma_{t_0}( u, u_j) =0  \, . 
\end{equation}
The implication follows from the formula (\ref{Eq:DecomposicaoGammaOrtogonal}). Conversely, if $u \notin E_i$ then $u = u_1 + \ldots + u_r$ with $u_\ell \in E_\ell$, $\ell=1, \ldots , r$, and $u_j \ne 0$ for some $j \ne i$. Hence, since $\gamma_{t_0}$ is non-degenerate and by (\ref{Eq:DecomposicaoGammaOrtogonal}) the decomposition of  $E$ is $\gamma_{t_0}$-orthogonal, there must be an element $v_j \in E_j$ such that $\gamma_{t_0}(u_j, v_j)=1$.%
\footnote{If this was not true, we would conclude that $\gamma_{t_0}(u_j,\cdot)=0$, contradicting the nondegeneracy of $\gamma_{t_0}$.} 
Therefore, $\gamma_{t_0}(u, v_j)=\gamma_{t_0}(u_j, v_j)=1$, proving the statement in (\ref{Eq:DecomposicaoPertinenciaPartes}). 

At this point we have the right conditions to assert that each $E_i$ is invariant under $G_{t_0}(s)$, for any $s \in \mathcal{I}$. In order to prove it, note that $b_{t_0}(t_0,s)=\gamma_s$ and as we take $u_\ell \in E_\ell$, $\ell=1, \ldots, r$, $i \ne j$, we get
\begin{eqnarray*}
\gamma_{t_0}(u_j, G_{t_0}(s) \cdot u_i) & = & b_{t_0}(t_0,s) \cdot (u_j, u_i) \\
    & = & \gamma_s( u_j, u_i) \\
    & = & 0 \, .
\end{eqnarray*}
The assertion (\ref{Eq:DecomposicaoPertinenciaPartes}) implies that $G_{t_0}(s) \cdot u_i \in E_i$ whenever $u_i \in E_i$, checking our statement.

Now we can construct our simultaneously orthogonal basis. For this, take $t' \in \mathcal{I}$ different from $t$ and $t_0$.%
\footnote{If there is none, this theorem would reduce to Proposition \ref{Prop:DiagonalizacaoG}.}
By hypothesis, each eigenvalue of $G_{t_0}(t')$ is real and there is a basis of $E$ composed by the associated eigenvectors. Since each subspace $E_i$ in the decomposition (\ref{Eq:Decomposicao}) is invariant under $G_{t_0}(t')$, if we have $G_{t_0}(t') \cdot u' = \lambda' \, u'$, then $G_{t_0}(t') \cdot u_\ell' = \lambda' \, u_\ell'$ for each $\ell=1, \ldots ,r$, as $u'=u_1' + \ldots + u_r'$. Therefore we conclude that the restriction of $G_{t_0}(t')$ to $E_i$ is also a diagonalizable operator on $E_i$. Now the theorem easily follows by  induction in the dimension of the space $E$. Indeed, probing it a little bit deeper, if we proceed like above over and over again with different values in $\mathcal{I}$, we will stop in a final decomposition  
\begin{equation}\label{Eq:DecomposicaoFinal}
 E= \widetilde{E}_1 \oplus \ldots \oplus \widetilde{E}_{R} \, ,
\end{equation}
such that for any $s \in \mathcal{I}$, it is $\gamma_s$-orthogonal and 
\begin{equation}
 G_{t_0}(s) \cdot \widetilde{u}_i = \lambda_i(s) \, \widetilde{u}_i  
\end{equation}
for each $\widetilde{u}_i \in \widetilde{E}_i$, $i= 1, \ldots , R$.
The decomposition (\ref{Eq:DecomposicaoFinal}) means that each $\gamma_s$ is given by the orthogonal sum
\begin{equation}\label{Eq:DecomposicaoFinalFormas}
 \gamma_s = \gamma_s^{(1)} \oplus \ldots \oplus \gamma_s^{(R)} \, ,
\end{equation}
with each $\gamma_s^{(i)}$ a symmetric bilinear form in $\widetilde{E}_{i}$ satisfying
\begin{equation}\label{Eq:DecomposicaoFinalFormas2}
 \gamma_s^{(i)} = \lambda_i(s) \, \gamma_{t_0}^{(i)} \, .
\end{equation}
Therefore, any orthogonal basis for $\gamma_{t_0}$ in which each of its elements is in some subspace $\widetilde{E}_{i}$ is a simultaneous orthogonal basis for the family $\mathcal{F}$. 
\qed

We recall that $\gamma$ is positive definite if $\gamma (v,v) >0$ for $v \neq 0$, and negative definite if $-\gamma$ is positive definite.  It is clear that every definite symmetric form is non-degenerate. Furthermore, it is a common knowledge in linear algebra that, if $\gamma_{t_0}$ is positive (negative) definite, then $G_{t_0}(t)$ is a diagonalizable operator on $E$, for it is symmetric with respect to the inner product $\gamma_{t_0}$, that is, $\gamma_{t_0}(G_{t_0}(t)\cdot u, v) = \gamma_{t_0}( u,G_{t_0}(t)\cdot v)$. This follows from the relation
\[
 \gamma_{t_0}(G_{t_0}(t)\cdot u, v) = \gamma_{t_0}^\flat(G_{t_0}(t)\cdot u)\cdot v 
 = (\gamma_{t_0}^\flat \circ \gamma_{t_0}^{\#} \circ \gamma_{t}^\flat) (u) \cdot v 
 = \gamma_{t}(u,v) \, .
\]
It implies that $\gamma_{t_0}$ and $\gamma_{t}$ are simultaneously diagonalizable. Therefore, the hypothesis in the main Theorem \ref{Thm:SimultaneousDiagonalization} can be relaxed and we get 
\begin{cor} \label{Cor:SimultaneousDiagonalization}
If $\gamma_{t_0}$ is positive or negative definite for some $t_0 \in \mathcal{I}$, then the family $\mathcal{F}=\{\gamma_t \, | \, t \in \mathcal{I} \}$ is simultaneously diagonalizable if, and only if,  for each $s,t \in \mathcal{I}$, the bilinear form $b_{t_0}(s,t)$ is symmetric.    
\end{cor}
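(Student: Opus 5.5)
The plan is to reduce Corollary \ref{Cor:SimultaneousDiagonalization} to Theorem \ref{Thm:SimultaneousDiagonalization} by showing that definiteness of $\gamma_{t_0}$ makes condition (SD1) automatic. Then only (SD2) remains, which is exactly the symmetry of every $b_{t_0}(s,t)$. The forward implication needs no new work. If $\mathcal{F}$ is simultaneously diagonalizable, the theorem gives (SD2) directly; one can also see it by writing every $\gamma_s$ as a diagonal matrix in a common basis, where $[\gamma_s][\gamma_{t_0}]^{-1}[\gamma_t]$ is a product of diagonal matrices and hence symmetric.

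For the converse, I first reduce to the positive definite case. If $\gamma_{t_0}$ is negative definite, replace it by $-\gamma_{t_0}$. This changes neither simultaneous diagonalizability nor the symmetry of the $b$'s, since $G_{t_0}(t)$ and $b_{t_0}(s,t)$ only change sign. Formally, I adjoin $-\gamma_{t_0}$ to the family and use it as the distinguished element, after checking that the symmetry conditions for the enlarged family follow from the original ones. Now $\gamma_{t_0}$ is an inner product on $E$. By the computation displayed just before the corollary, $\gamma_{t_0}(G_{t_0}(t)u,v)=\gamma_t(u,v)$, and this is symmetric in $u,v$, so $G_{t_0}(t)$ is self-adjoint with respect to $\gamma_{t_0}$.

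Next I invoke the spectral theorem for self-adjoint operators on a finite-dimensional real inner product space. It gives a $\gamma_{t_0}$-orthonormal eigenbasis for $G_{t_0}(t)$, so $G_{t_0}(t)$ is diagonalizable. This is (SD1) for every $t\in\mathcal{I}$; by Proposition \ref{Prop:DiagonalizacaoG} it is equivalent to each pair $\{\gamma_{t_0},\gamma_t\}$ being simultaneously diagonalizable. Combining (SD1) with the hypothesis (SD2), Theorem \ref{Thm:SimultaneousDiagonalization} yields simultaneous diagonalizability of $\mathcal{F}$.

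The only substantive step is the diagonalizability of $G_{t_0}(t)$, and it rests entirely on the spectral theorem. The point to verify is that self-adjointness is taken with respect to $\gamma_{t_0}$, not the standard inner product, since $[\gamma_{t_0}]^{-1}[\gamma_t]$ need not be a symmetric matrix. The sign reduction in the negative definite case is routine bookkeeping.
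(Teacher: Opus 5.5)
Your proposal is correct and follows the paper's route: definiteness makes $G_{t_0}(t)$ self-adjoint with respect to the inner product $\pm\gamma_{t_0}$, so the spectral theorem makes (SD1) automatic and the main theorem finishes. Your sign reduction (adjoining $-\gamma_{t_0}$ to the family) is valid but unnecessary, because the identity $\gamma_{t_0}(G_{t_0}(t)u,v)=\gamma_t(u,v)$ already shows that $G_{t_0}(t)$ is self-adjoint for the inner product $-\gamma_{t_0}$.
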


\vspace{6mm}
 
{\it Acknowledgements:} The author is thankful for the support from FAPEMIG, project number RED-00133-21.

\bibliography{Ref_Diagonalization}

\end{document}